\acrodef{3GPP}{3rd generation partnership project}
\acrodef{3D}{three-dimensional}
\acrodef{2D}{two-dimensional}
\acrodef{5G}{fifth generation}
 \acrodef{ABS}{almost blank subframe}
 \acrodef{AES}{advanced encryption standard}
\acrodef{APE}{absolute percentage error}
\acrodef{AR}{augmented reality}
    \acrodef{BS}{base station}
		\acrodef{BA}{bundle adjustment}
    \acrodef{CDF}{cumulative distribution function}
		\acrodef{CEVP}{constrained eigenvalue problem}
    \acrodef{CSI}{channel state information}
    \acrodef{CQI}{channel quality indicator}
		\acrodef{CNN}{convolutional neural network}
		\acrodef{CSP}{communication service provider}
\acrodef{DL}{downlink}
\acrodef{DoF}{degree of freedom}
\acrodef{DNN}{deep neural network}
\acrodef{DRL}{deep reinforcement learning}
\acrodef{DUDe}{downlink and uplink decoupling}
\acrodef{DDPG}{Deep Deterministic Policy Gradient}
\acrodef{eICIC}{enhanced intercell interference coordination}
\acrodef{ESD}{energy spectral density}
\acrodef{ECDSA}{elliptic curve digital signature algorithm}
\acrodef{XR}{extended reality}
\acrodef{FDD}{frequency division duplex}
    \acrodef{FDMA}{frequency division multiple access}
		\acrodef{fps}{frame per second}
   \acrodef{GP}{Gaussian process}
    \acrodef{GPS}{global positioning system}
		\acrodef{GUI}{graphical user interface}
\acrodef{HetNet}{heterogeneous network}
\acrodef{HOG}{histogram of oriented gradients}
    \acrodef{ICI}{inter-cell interference}
		\acrodef{IMI}{inter-mode interference}
		\acrodef{IMU}{inertial measurement unit}
		\acrodef{IoT}{internet of things}
\acrodef{KL}{Kullback-Leibler}
\acrodef{LTE}{long term evolution}
\acrodef{mAP}{mean average precision}
\acrodef{MAC}{media access control}
\acrodef{MAPE}{mean absolute percentage error}
\acrodef{MARL}{multi-agent reinforcement learning}
\acrodef{MADRL}{Multi-agent deep reinforcement learning}
\acrodef{MDP}{Markov decision process}
\acrodef{MMDP}{multi-agent Markov Decision Process}
\acrodef{MIMO}{multiple-input and multiple-output}
\acrodef{MRU}{minimum resource unit}
\acrodef{mmWave}{millimeter wave}
\acrodef{MLP}{multi-layer perception}
\acrodef{ML}{machine learning}
\acrodef{MILP}{mixed integer linear programming}
\acrodef{NLES}{nonlinear equation system}
\acrodef{NSM}{Network Slicing Management}
\acrodef{NLP}{natural language processing}
   \acrodef{OFDM}{orthogonal frequency division multiplexing}
	\acrodef{ORB}{oriented FAST and rotated BRIEF}
    \acrodef{PDF}{probability density function}
    \acrodef{PHY}{physical layer}
		\acrodef{PSD}{power spectral density}
    \acrodef{PRB}{physical resource block}
	\acrodef{PRBs}{physical resource blocks}
   \acrodef{QoE}{quality of experience}
    \acrodef{QoS}{quality of service}
    \acrodef{RAN}{radio access network}
		\acrodef{RANSAC}{random sample consensus}
		\acrodef{RB}{resource block}
		\acrodef{RBS}{removal of bottleneck services}
		\acrodef{RL}{reinforcement learning}
		\acrodef{R-FCN}{region-based fully convolutional networks}
		\acrodef{RMDI}{resource muting for dominant interferer}
		\acrodef{RMSD}{root mean square distance}
		\acrodef{ROI}{region of interest}
		\acrodef{RPN}{region proposal network}
    \acrodef{RRM}{radio resource management}
		\acrodef{RU}{resource unit}
		\acrodef{RX}{receiver}
 \acrodef{SAFP}{successive approximation of fixed point}
 \acrodef{SE}{secure element}
    \acrodef{SDN}{software defined network}
    \acrodef{SNR}{signal-to-noise ratio}
    \acrodef{SINR}{signal-to-interference-plus-noise ratio}
\acrodef{SIR}{signal-to-interference ratio}
\acrodef{SIF}{standard interference function}
\acrodef{SLAM}{simultaneous localization and mapping}
\acrodef{SLAMORE}{Simultaneous Localization and Mapping with Object REcognition}
\acrodef{SoC}{system-on-chip}
\acrodef{SONs}{self organizing networks}
\acrodef{SSD}{single-shot multibox detector}
\acrodef{SPI}{serial peripheral interface}
    \acrodef{SVM}{support vector machine}
		\acrodef{SVD}{singular value decomposition}
    \acrodef{TCP}{transmission control protocol}
		\acrodef{TLS}{transport layer security}
		\acrodef{TL}{transfer learning}
		\acrodef{TDD}{time division duplex}
	\acrodef{TD3}{Twin Delayed Deep Deterministic policy gradient}
    \acrodef{TDMA}{time division multiple access}
		\acrodef{TTI}{transmission time interval}
		\acrodef{TX}{transmitter}
		\acrodef{UART}{universal asynchronous receiver-transmitter}
		\acrodef{UDP}{user datagram protocol}
		\acrodef{UE}{user equipment}
		\acrodef{UI}{user interface}
		\acrodef{UL}{uplink}
		\acrodef{UAV}{unmanned aerial vehicle}
		\acrodef{VO}{visual odometry}
		\acrodef{V2X}{vehicle-to-everything}
		\acrodef{VAE}{variational auto-encoder}
    \acrodef{WLAN}{wireless local area network}
\acrodef{YOLO}{you only look once}
\DeclareMathAlphabet\mathbfcal{OMS}{cmsy}{b}{n}
\newtheorem{theorem}{Theorem}[section]
\newtheorem{lemma}[theorem]{Lemma}
\newtheorem{problem}[theorem]{Problem}
\DeclareRobustCommand\optionalsec[1]{%
  \ifnum\pdfstrcmp{#1}{\thesection}=0\else#1.\fi
}
\DeclareMathOperator*\argmax{arg \, max}		
\DeclareMathOperator*\argmin{arg \, min}		
\newcommand{\field}[1]{\mathbb{#1}}
\newcommand{\set}[1]{\mathcal{#1}}
\newcommand{\Set}[1]{\mathbfcal{#1}}
\newcommand{\R}{{\field{R}}}   
\newcommand{\Ex}{{\field{E}}}
\newcommand{\NN}{{\field{N}}}
\newcommand{\ma}[1]{\boldsymbol{\mathbf{#1}}} 
\newcommand{\ve}[1]{\boldsymbol{\mathbf{#1}}} 
\newcommand{\vs}{\ve{s}}
\newcommand{\vx}{\ve{x}}
\newcommand{\vz}{\ve{z}}
\newcommand{\vm}{\ve{m}}
\newcommand{\vc}{\ve{c}}
\newcommand{\va}{\ve{a}}
\newcommand{\N}{{\set{N}}}
\newcommand{\K}{{\set{K}}}
\newcommand{\T}{{\set{T}}}
\newcommand{\M}{{\set{M}}}
\newcommand{\D}{{\set{D}}}
\newcommand{\Ss}{{\set{S}}}
\newcommand{\A}{{\set{A}}}
\newcommand{\X}{{\set{X}}}
\newcommand{\Z}{{\set{Z}}}
\newcommand{\bM}{\Set{M}}
\newcommand{\bD}{\Set{D}}
\newcommand{\bT}{\Set{T}}
\newcommand{\operator}[1]{\mathrm{#1}}
\newcommand{\diag}{\operator{diag}}
\begin{document}
\title{Network Slicing via Transfer Learning aided Distributed Deep Reinforcement Learning\vspace{-0.1in}}

\author{
\IEEEauthorblockN{Tianlun Hu\IEEEauthorrefmark{1}\IEEEauthorrefmark{3}, 
				  Qi Liao\IEEEauthorrefmark{1}, 
				  Qiang Liu\IEEEauthorrefmark{2}, 
				  and Georg Carle\IEEEauthorrefmark{3}}
\IEEEauthorblockA{ 
	\IEEEauthorrefmark{1}Nokia Bell Labs, Stuttgart, Germany\\
	\IEEEauthorrefmark{2}University of Nebraska Lincoln, United States\\
	\IEEEauthorrefmark{3}Technical University of Munich, Germany\\
Email: \IEEEauthorrefmark{1}\IEEEauthorrefmark{3}\url{tianlun.hu@nokia.com}, 
	   \IEEEauthorrefmark{1}\url{qi.liao@nokia-bell-labs.com}, 
	   \IEEEauthorrefmark{2}\url{qiang.liu@unl.edu},
	   \IEEEauthorrefmark{3}\url{carle@net.in.tum.de}}
	   \vspace{-0.4in}
}

\maketitle

\begin{abstract}
Deep reinforcement learning (DRL) has been increasingly employed to handle the dynamic and complex resource management in network slicing. The deployment of DRL policies in real networks, however, is complicated by heterogeneous cell conditions. In this paper, we propose a novel transfer learning (TL) aided multi-agent deep reinforcement learning (MADRL) approach with inter-agent similarity analysis for inter-cell inter-slice resource partitioning. First, we design a coordinated MADRL method with information sharing to intelligently partition resource to slices and manage inter-cell interference. Second, we propose an integrated TL method to transfer the learned DRL policies among different local agents for accelerating the policy deployment. The method is composed of a new domain and task similarity measurement approach and a new knowledge transfer approach, which resolves the problem of \emph{from whom to transfer} and \emph{how to transfer}. We evaluated the proposed solution with extensive simulations in a system-level simulator and show that our approach outperforms the state-of-the-art solutions in terms of performance, convergence speed and sample efficiency. Moreover, by applying TL, we achieve an additional gain over $27\%$ higher than the coordinated MADRL approach without TL.
\end{abstract}

\section{Introduction}

Network slicing is the key technique in 5G and beyond which enables network operators to support a variety of emerging network services and applications, e.g., autonomous driving, metaverse, and machine learning.
The virtual networks (aka. network slices) are dynamically created on the common network infrastructures, e.g., base stations, which are highly customized in different aspects to meet the diverse performance requirement of these applications and services.
As the ever-increasing network deployment, e.g., small cells, the traffic of slices and inter-cell interference in radio access networks become more dynamic and complex.
Conventional model-based solutions, e.g., linear programming or convex optimization, can hardly handle the ever-complicating resource management problem.

Recent advances in machine learning, especially \ac{DRL} \cite{ConstrainedRLNetSlicing}, \cite{DeepSlicingQiang}, has shown a promising capability to deal with the dynamic and high-dimensional networking problems.
The machine learning techniques, as model-free approaches, learn from historical interactions with the network, which require no prior knowledge, e.g., mathematical models.
Several works studied to formulate resource management problems as \ac{MDP}s, which are then solved by using \ac{DRL} to derive a centralized policy with global observations of the network.
As the network scale grows, the action and state space of the centralized problem increases exponentially, which challenges the convergence and sample efficiency of \ac{DRL}.
\ac{MADRL} \cite{Zhao2019DeepRL, Shao2021GraphAN} has been exploited to address this issue, which creates and trains multiple cooperative \ac{DRL} agents, where each \ac{DRL} agent focuses on an individual site or cell.
However, training all individual \ac{DRL} agents from scratch can still be costly and time-consuming, e.g., expensive queries with real networks, and unstable environments from the perspective of individual \ac{DRL} agents.

Recently, \ac{TL} \cite{pan2009survey} based methods have been increasingly studied to improve the sample efficiency and model reproducibility in the broad machine learning fields \cite{nguyen2021transfer, wang2021transfer,Parera20}.
The basic idea of \ac{TL} is to utilize prior knowledge from prelearned tasks to benefit the training process in new tasks.
For example, the resource partitioning policy of a cell can be transferred to another cell when they share similar network settings, e.g., bandwidth, transmit power, and traffic pattern.
Generally, there are several questions to be answered before using \ac{TL} methods, i.e., \emph{what to transfer}, \emph{from whom to transfer}, and \emph{how to transfer}. 
Existing \ac{TL} methods are mostly focused on supervised machine learning, e.g., computer vision and natural language processing \cite{zhuang2020comprehensive}, which provide limited insights on applying in \ac{DRL} tasks \cite{Taylor2009TransferLF,DBLP:journals/corr/abs-2009-07888, Nagib2021TransferLA, mai2021transfer}.
Therefore, it is imperative to study how \ac{TL} improves the performance of \ac{MADRL} in terms of sample efficiency and fine-tune costs, in the inter-cell resource partitioning problem.

In this paper, we proposed a novel \ac{TL} aided \ac{MADRL} approach with domain similarity analysis for inter-slice resource partitioning.
First, we design a coordinated \ac{MADRL} method for inter-cell resource partitioning problems in network slicing, where \ac{DRL} agents share local information with each other to mitigate inter-cell interference.
The objective of \ac{MADRL} is to maximize the satisfaction level of per-slice service requirements in terms of average user throughput and delay in each cell. 
Second, we design an integrated \ac{TL} method to transfer the learned DRL policies among different agents for accelerating the policy deployment, where the new method consists of two parts.
On the one hand, we propose a feature-based inter-agent similarity analysis approach, which measures the domain and task difference by extracting representative feature distributions in latent space.
On the other hand, we propose a new knowledge transfer approach with the combined model (policy) and instance transfer.
The main contributions of this paper are summarized as follows:
\begin{itemize}
	\item We design a coordinated \ac{MADRL} method for the inter-cell resource partitioning problem in network slicing.
 	\item We design a novel inter-agent similarity analysis approach, based on the features extracted by \ac{VAE} to evaluate both domain and task similarity between two reinforcement learning agents.
  	\item We design a new knowledge transfer approach that combines the model (policy) and instance transfer from the selected source agent to the target agent.
  	\item We evaluate the performance of the proposed solution with extensive simulations in a system-level simulator. The results show that, by applying \ac{TL}, we achieve an additional gain over $27\%$ higher than the coordinated \ac{MADRL} approach without \ac{TL}. Moreover, the performance gain achieved by \ac{TL} is more significant in the low-data regime.
\end{itemize}

\begin{figure}[t]
	\centering
\includegraphics[width=0.42\textwidth]{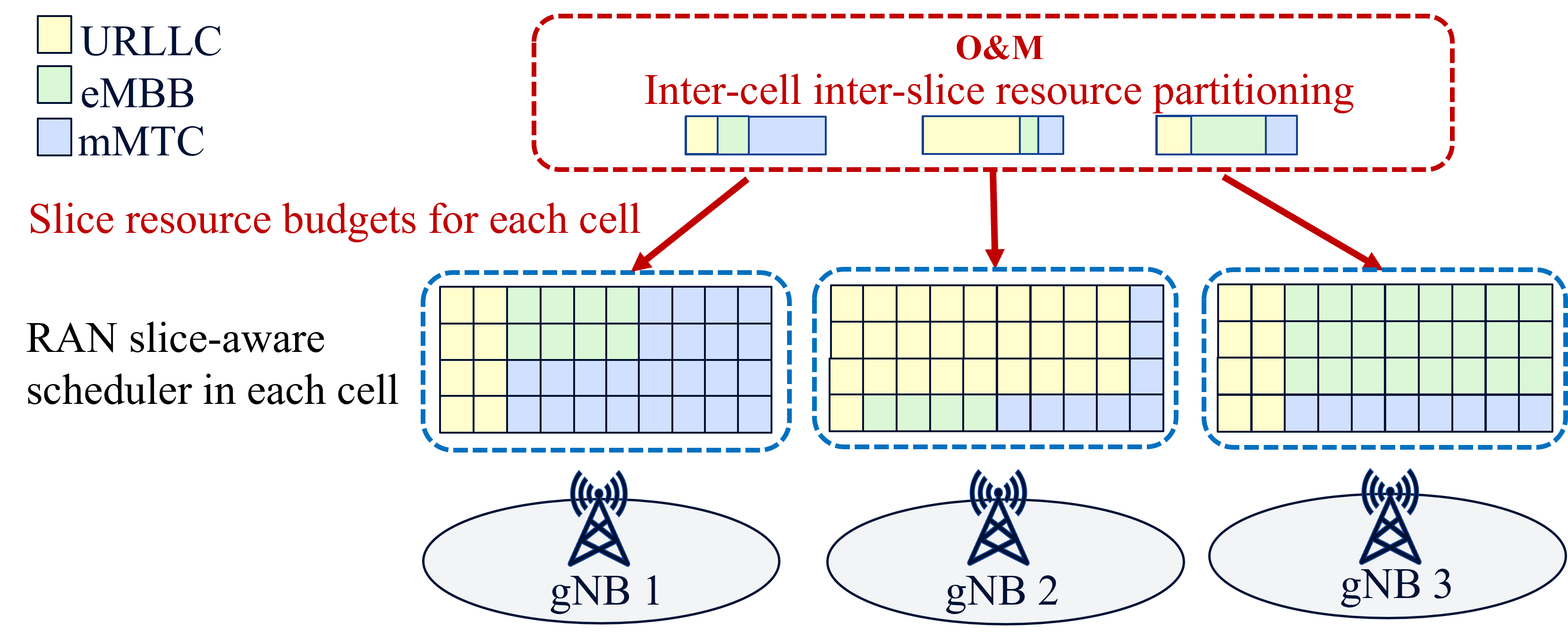}
	\vspace{0ex}
	\caption{Dynamic multi-cell slicing resource partitioning}
	\label{fig:RANSlicing}
	\vspace{0ex}
\end{figure}
\vspace{-0.04in}
\section{System Model and Definitions}
\vspace{-0.04in}
We consider a network consisting of a set of cells $\K:=\{1, 2, \ldots, K\}$ and a set of slices $\N:=\{1, 2, \ldots, N\}$. Each slice $n\in\N$ has predefined average user throughput and delay requirements, denoted as $\phi_n^*$ and $d_n^*$ respectively. The network system runs on discrete time slots $t \in \NN_0$. As illustrated in Fig. \ref{fig:RANSlicing}, network operation and maintenance (O\&M) adapts the inter-slice resource partitioning for all cells to provide per-slice resource budgets to each cell periodically. Then, within each cell, the \ac{RAN} scheduler uses the provided resource budgets as constraints and performs resource scheduling and \ac{PRB} allocation. In this paper, we focus on the inter-cell inter-slice resource partitioning problem in network O\&M. 

Considering the diverse slice requirements and dynamic network conditions, we model the multi-cell resource partitioning system as a set of $K$ distributed \acp{MDP} $\bM:=\{\M_1,...,\M_K\}$, with $\M_k:=\{\Ss_k, \A_k, P_k(\cdot), r_k(\cdot), \gamma_k\}$ defined for each agent $k\in\K$ (with a slight abuse of notation, hereafter we use $k$ for cell and agent interchangeably). $\Ss_k$ and $\A_k$ denote the state space and action space respectively. $P_k(\cdot):\Ss_k\times\A_k\times\Ss_k\rightarrow[0, 1]$ is the transition probability over $\Ss_k$ and $\A_k$ for cell $k$. $r_k: \Ss_k\times\A_k\rightarrow\R$ is defined as the reward function which evaluates the network service of all slices in cell $k$ and $\gamma_k$ denotes the discount factor for cumulative reward calculation. 

At each time step $t$, agent $k$ collects state $\vs_k(t) \in \Ss_k$ and decides an action $\va_k(t)\in\A_k$ according to policy $\pi_k: \Ss_k\rightarrow\A_k$, which indicates the per-slice resource partitioning ratio $a_{k,n}\in[0, 1]$ for $n\in\N$ while aligning with inter-slice resource constraints. Thus, the local action space $\A_k$ yields

\begin{equation}
	\A_k := \left\{\va_k\bigg|a_{k,n}\in[0,1], \forall n\in\N; \sum_{n=1}^{N} a_{k,n} = 1\right\}.
	\label{eqn:local_actionspace}
\end{equation}

For each cell $k\in\K$, our objective is to maximize the minimum service satisfaction level in terms of average user throughput and delay $(\phi_n^*, d_n^*)$ over all slices. Thus, for each agent $k$, we define the local reward function based on the observed per-slice average user throughput $\phi_{k,n}(t)$ and delay $d_{k,n}(t)$ at time $t$ as
\begin{equation}
	r_k(t) := \min_{n\in\N} \min\left\{\frac{\phi_{k,n}(t)}{\phi_{k,n}^\ast}, \frac{d_{k,n}^\ast}{d_{k,n}(t)}, 1\right\}.
	\label{eqn:local_rewardfuntion}
\end{equation}

The reward formulation drops below $1$ when the actual average
throughput or delay of any slices fails to fulfill the requirements. Note that the reward is upper bounded by $1$ even if all slices achieve better performances than the requirements, to achieve more efficient resource utilization. The second item in \eqref{eqn:local_rewardfuntion} is {\bf inversely proportional} to the actual delay, namely, if the delay is longer than required this term is lower than $1$.
\vspace{-0.05in}
\section{Problem Formulation}\label{sec:problem}
\vspace{-0.05in}
{\bf The Reinforcement Learning Problem}: The problem is to find a policy $\pi_k: \Ss_k\rightarrow\A_k$ for each $k\in\K$ that predicts optimal inter-slice resource partitioning $\va_k(t)\in\A_k$ base on the local state $\vs_k(t)\in\Ss_k$ dynamically, to maximize the expectation of the cumulative discounted reward $r_k(t)$ defined in \eqref{eqn:local_rewardfuntion}, in a finite time horizon $T$. The problem is given by:
\begin{equation}
\max_{\pi_k; \va_k(t)\in\A_k} \Ex_{\pi_k} \left[\sum_{t=0}^{T} \gamma_k^t r_k\big(\vs_k(t), \va_k(t) \big)\right], \ \forall k\in\K,
	\label{eqn:DRL_problem}
\end{equation}
where $\A_k$ is defined in \eqref{eqn:local_actionspace}.

In our previous work \cite{Hu2022InterCellReDRL}, we proposed a coordinated multi-agent \ac{DRL} approach to transform an \ac{MADRL} problem to the distributed \ac{DRL} problem similar to \eqref{eqn:DRL_problem}, where the extracted information from neighboring cells is included into the state observation to better capture the inter-agent dependency. However, training all local agents in parallel from scratch can be costly and time-consuming. Moreover, the trained models are sensitive to environment changes and the retraining cost can be high.

Thus, in this paper, we raise the following new questions:\\
\emph{Can we reuse the knowledge in a pretrained model? When is the knowledge transferable? And, most importantly, how to transfer the gained knowledge from one agent to another?}


{\bf The Transfer Learning Problem}: To tackle the transfer learning problem, let us first introduce two definitions \emph{domain} and \emph{task} in the context of reinforcement learning. 

A \emph{domain} $\D:=\{\Ss, P(\vs)\}$ consists of a state feature space $\Ss$ and its probability distribution $P(\vs)$, for $\vs\in\Ss$. A \emph{task} $\T:=\{\A, \pi(\cdot)\}$ consists of the action space $\A$ and a policy function $\pi: \Ss\to \A$.

Thus, our inter-agent transfer learning problem is to find the optimal source agent among a set of pretrained agents, and transfer its knowledge (pretrained model and collected instances) to the target agent, such that problem \eqref{eqn:DRL_problem} can be solved in the target agent with fast convergence and limited amount of samples. In particular, the problem is defined in Problem \ref{prob:TL}.

\begin{problem}\label{prob:TL}
Given a set of pretrained source agents $\overline{\K}\subset\K$ with source domains $\bD^{(S)}:=\left\{\D_i^{(S)}: i\in\overline{\K}\right\}$ and pretrained tasks $\bT^{(S)}:=\left\{\T_i^{(S)}: i\in\overline{\K}\right\}$, also given any target agent $k\notin\overline{\K}$ with target domain $\D_k^{(T)}$ and untrained task $\T_k^{(T)}$, find the optimal source agent $i^{\ast}_k\in\overline{\K}$ for target agent $k$ to transfer knowledge such that 
\begin{align}
	\label{eqn:TL_problem}
	i^*_k := & \argmax\limits_{\substack{\pi_k|\pi^{(0)}_k = \Lambda\left(\pi^{(S)}_i\right);\\ i\in\overline{\K}}}
	 \ \Ex_{\pi_k} \left[\sum_{t=0}^{T} \gamma^t_k r_k\big(\vs_k(t), \va_k(t) \big)\right] \\
	& \mbox{s.t. } (\vs_k, \va_k) \in \Gamma\left(\D^{(S)}_{i}, \D_k^{(T)}, \A^{(S)}_{i}, \A_k^{(T)}\right), \nonumber
\end{align}
where $\Lambda\left(\pi^{(S)}_{i}\right)$ is the \emph{policy transfer} strategy which maps a pretrained source policy $\pi^{(S)}_{i}$ to the initial target policy $\pi^{(0)}_k$, while $\Gamma\left(\D^{(S)}_{i}, \D_k^{(T)}, \A^{(S)}_{i}, \A_k^{(T)}\right)$ is the \emph{instance transfer} strategy which selects the instances from the source agent, combines them with the experienced instances from the target agent, and saves them in the replay buffer for model training or fine-tuning in the target agent. More details about the transfer learning strategies will be given in Section \ref{subsec:TL_Approach}.
\end{problem}

\section{Proposed Solutions}
\label{subsec:Dift_DRL}
In this section, we first present a distributed \ac{MADRL} approach to solve the slicing resource partitioning problem in \eqref{eqn:DRL_problem}. Then, to solve problem \eqref{eqn:TL_problem} to find the optimal source agent, we propose a novel approach to inter-agent similarity analysis based on the extracted features using \ac{VAE}. Finally, for inter-agent transfer learning, we introduce transfer learning strategy which combines the model (policy) transfer and instance transfer.

\subsection{Coordinated \ac{MADRL} Approach}\label{subsec:MADRL}

As stated in \eqref{eqn:DRL_problem} , the distributed \ac{DRL} approach allows each agent to learn a local policy and makes its own decision on inter-slice resource partitioning based on local observation. Compared with the centralized \ac{DRL} approaches, distributed approaches reduce the state and action spaces and significantly accelerate the training progress. However, local observation alone cannot capture the inter-cell dependencies and provide sufficient information to achieve the globally optimal solution. Thus, we proposed in \cite{Hu2022InterCellReDRL} a distributed \ac{DRL} approach with inter-agent coordination which keeps the low model complexity while including the extracted information from neighboring cells to capture the inter-cell interference. We briefly summarize the coordinated distributed \ac{DRL} approach below, because we would like to focus on the main contribution, namely, the inter-agent transfer learning, in this paper. For more details, readers are referred to our previous work \cite{Hu2022InterCellReDRL}. 

Each local agent $k$ observes a local state $\vs'_k$, which contains the following network measurements:

\begin{itemize}
	\item Per-slice average user throughput $\{\phi_{k,n}:n\in\N\}$;
 	\item Per-slice network load $\{l_{k,n}:n\in\N\}$;
  	\item Per-slice number of users $\{u_{k,n}: n\in\N\}$.
\end{itemize}
Thus, with the above-defined three slice-specific features, the local state $\vs'_k$ has the dimension of $3N$. Additionally, to better capture the inter-cell dependencies and estimate the global network performance, we introduce an inter-agent coordination mechanism through network information sharing among agents. Let each agent $k$ broadcast a message $\vm_k$ to its neighboring group of agents, denoted by $\K_k$, which means, each agent $k$ receives a collection of messages $\overline{\vm}_k:=[\vm_i: i\in\K_k]\in\R^{Z^{(m)}}$. Instead of using all received messages in $\overline{\vm}_k$, we propose to to extract useful information $\vc_k\in\R^{Z^{(c)}}$ to remain the low model complexity. We aim to find an feature extractor $g: \R^{Z^{(m)}}\rightarrow\R^{Z^{(c)}}: \overline{\vm}_k\rightarrow\vc_k$, such that $Z^{(c)}\ll Z^{(m)}$. Then, we include the extracted features from the shared messages into the local state: $\vs_k:=[\vs'_k, \vc_k]$.

Knowing that the inter-agent dependencies are mainly caused by inter-cell interference based on cell load coupling \cite{Cavalcante2019ConnectionsBS}, we propose to let each cell $k$ share its per-slice load $l_{k,n}, \forall n\in\N$ to its neighboring cell. Then, we compute the extracted information $\vc_k$ as the average per-slice neighboring load. Namely, we define a deterministic feature extractor, given by:
\vspace{1ex}
\begin{equation}
	\label{neighbor_info}
	\begin{aligned}
	g_k: & \R^{N|\K_k|}\to\R^N : \left[l_{i, n}:n\in\N, i\in\K_k\right] \mapsto \vc_k(t)\\
	  \mbox{with } &  \vc_k(t):=\left[\frac{1}{|\K_k|}\sum_{i\in \K_k}l_{i,n}(t): n\in\N \right].
	 \end{aligned}
  \end{equation}

With the extended local state including the inter-agent shared information, we can use classical \ac{DRL} approaches, e.g., the actor-critic algorithms such as \ac{TD3} \cite{Fujimoto2018AddressingFA} to solve \eqref{eqn:DRL_problem}.

\subsection{Integrated \ac{TL} with Similarity Analysis}
\label{subsec:Domain_Sim}
The distributed \ac{DRL} approach introduced in Section \ref{subsec:MADRL} allows us to derive a set of pretrained local agents. Still, given a target cell $k$, e.g., a newly deployed cell, or an existing cell but with changed environment, more questions need to be answered: Can we transfer the prelearned knowledge from at least one of the pretrained agents? Which source cell provides the most transferable information? How to transfer the knowledge?

To solve the transfer learning problem in \eqref{eqn:TL_problem}, we develop a distance measure $\mathfrak{D}_{i,k}$ to quantify the inter-agent similarity between a source agent $i$ and a target agent $k$. We aim to transfer the knowledge from the source agent with the highest similarity (reflected by the lowest distance measure).

The ideal approach to analyze the domain and task similarity between two agents is to obtain their probability distributions of the state $P(\vs)$ and derive the conditional probability distribution $P(\va|\vs)$. However, the major challenge here lies in the limited samples in the target agent. Considering that the target agent is a newly deployed agent, there is no information available about its policy $P(\va|\vs)$, and $P(\vs)$ is very biased, because all samples are collected under the default configurations (i.e., constant actions).

Thus, we need to design a distance measure constrained by very limited and bias samples in the target agent, without any information about its policy $P(\va|\vs)$. Our idea is to derive and compare the {\bf joint state and reward distribution} under the same default action $\va'$, $P\left(\vs, r|\va=\va'\right)$, in both source and target agent. The rationale behind this is that, when applying the actor-critic-based \ac{DRL} architecture, the critic function estimates the Q value $Q_{\pi}(\va, \vs)$ based on action and state. Hence, the conditional probability $P(r|\vs,\va)$ should provide useful information of the policy. With $\va=\va'$, we can consider to estimate $P(r|\vs, \va=\va')$. To efficiently capture the information for both domain similarity (based on $P(\vs|\va=\va')$) and task/policy similarity (based on $P(r|\vs, \va=\va')$), we propose to estimate the joint probability $P(\vs, r|\va=\va') = P(r|\vs, \va=\va')P(\vs|\va=\va')$.

\begin{figure}[t]
	\centering
	\includegraphics[width=0.3\textwidth]{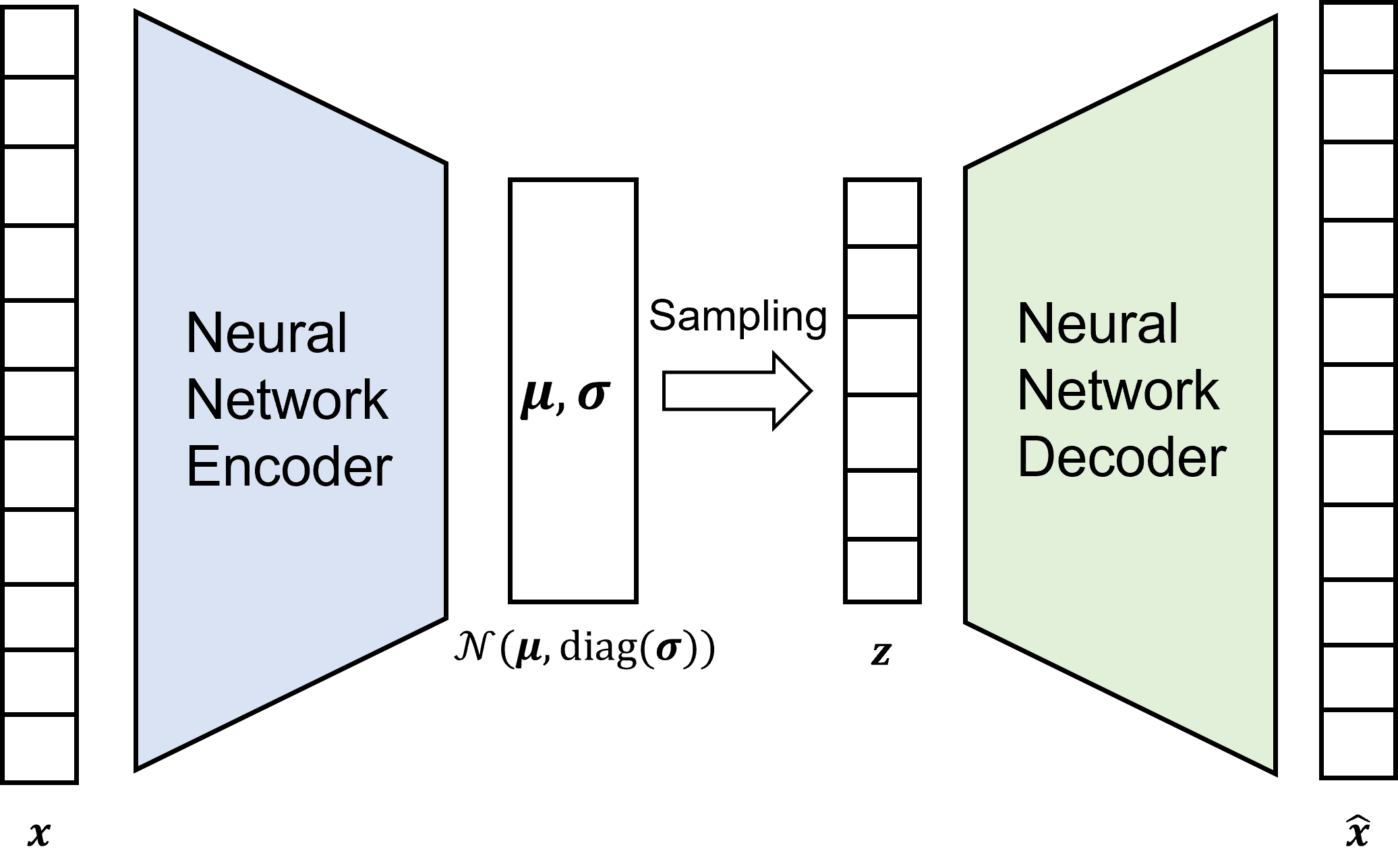}
	\vspace{-1ex}
	\caption{Variational autoencoder}
	\label{fig:VAE}
	\vspace{1ex}
\end{figure}

{\bf Sample collection}: To estimate the distance between $P(\vs, r|\va=\va')$ of both the source and target agents, we use all available samples from the target agent $k$ under the default action $\va'$, $\X_k = \{\left(\vs_k(n), r_k(n)\right)_{\va_k(n)=\va'}: n = 1, \dots, N_k\}$, and select a subset of the samples from the source agent $i$ with the same default action $\X_i = \{\left(\vs_i(n), r_i(n)\right)_{\va_i(n)=\va'}: n = 1, \dots, N_i\}$. Note that in this subsection we slightly abuse the notation by using $n$ as index of samples, and $N_k$ as number of samples with default action collected from agent $k$.

{\bf Feature extraction with \ac{VAE}}: To extract the representative features from the high-dimension vector $[\vs, r]$, we propose to apply \ac{VAE} \cite{Sohn2015LearningSO} to map the samples into a low dimensional latent space. As Fig. \ref{fig:VAE} illustrates, for each sample $\vx:=[\vs, r]\in\X$, the encoder of \ac{VAE} estimates an approximated distribution $P(\vz)$ in latent space $\Z$ as a multi-variate Gaussian distribution with $\N(\ve{\mu}, \diag(\ve{\sigma}))$, where $\diag$ denotes the diagonal matrix. The decoder samples a latent variable $\vz\in\Z$ from the approximated distribution $\vz\sim\N(\ve{\mu}, \diag(\ve{\sigma}))$ and outputs a reconstructed sample $\hat{\vx}$ by training on the following loss function:
\begin{align}
	\mathscr{L} := &  \|\vx-\hat{\vx}\|^2 + \nonumber\\
	& \alpha\cdot D_{KL}\left(\N(\ve{\mu}, \diag(\ve{\sigma}))\| \N(\ve{0},\diag(\ve{1}))\right), \label{eqn:VAE_loss}
\end{align}
where $\alpha$ is the weight factor and $D_{KL}$ denotes the \ac{KL} divergence.

{\bf Inter-agent similarity analysis}: Since \ac{VAE} does not directly provide the probability distribution function $P(\vx)$, we propose to utilize the extracted features in the latent space to evaluate the inter-agent similarity. Considering the limited amount of samples (only those under default action), we propose to train a general \ac{VAE} model based on the samples from all candidate source agents and the target agent, e.g., $\mathbf{\X} = \bigcup_{j\in\overline{\K}\cup\{k\}} \X_j$. The idea is to extract the latent features from samples from all relevant agents with a general encoder and to distinguish the agents within a common latent space.

Thus, for each sample $\vx_n\in\X$, we can derive its extracted features, i.e., the posterior distribution $P(\vz_n|\vx_n)=\N(\ve{\mu}_n, \diag(\ve{\sigma}_n))$. We denote the extracted latent space for agent $k$ by $\Z_k$.
Next, we can measure the inter-agent distance between an arbitrary source agent $i$ and target agent $k$ by calculating the \ac{KL} divergence based on the extracted latent variables from their collected samples:
\begin{align}
	\mathfrak{D}_{i,k}  & :=\frac{1}{N_iN_k}\cdot\nonumber\\
	\sum\limits_{\substack{(\ve{\mu}_n, \ve{\sigma}_n)\in\Z_i; \\ (\ve{\mu}_m, \ve{\sigma}_m)\in\Z_k}} & D_{KL}\left(\N(\ve{\mu}_{n}, \diag(\ve{\sigma}_{n}))\| \N(\ve{\mu}_{m}, \diag(\ve{\sigma}_{m}))\right). \label{domain_kl}
\end{align}
This requires to compute the \ac{KL} divergence of every pair of samples $(n,m)$ for $n\in\X_i$ and $m\in\X_k$, which could be computing intensive. 

Note that they are both Gaussian distributions, we can efficiently compute them with closed-form expression (as will be shown later in \eqref{eqn:KL_div}). 
Besides, from our experiment, we observed that $\ve{\sigma}_n\to \ve{0}$ for nearly all the collected samples $\vx_n\in\X$, i.e., their variances are extremely small (to the level below $10e-5$ from our observation).
Thus, for our problem, we can use a trick to evaluate the distance measure more efficiently based on the following lemma. 

\begin{lemma}
\label{lem:kl_mean_prop}
Given two multi-variate Gaussian distributions $p = \N(\ve{\mu}_{n}, \ma{\Sigma}_n)$ and $q = \N(\ve{\mu}_{m}, \ma{\Sigma}_m)$, where $\ve{\mu}_{n}, \ve{\mu}_{m}\in\R^L$, $\ma{\Sigma}_n=\ma{\Sigma}_m = \diag(\ve{\sigma})\in\R^{L\times L}$ and every entry of $\ve{\sigma}$ is equal to a small positive constant $\sigma\ll 1$, the \ac{KL} divergence $D_{KL}(p||q)$ is proportional to $\sum_{l = 1}^L (\mu_{n,l} - \mu_{m,l})^2$.
\end{lemma}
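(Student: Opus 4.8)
The plan is to specialize the standard closed-form expression for the KL divergence between two $L$-dimensional Gaussians to the equal-covariance setting assumed in the lemma, and then read off the proportionality constant. First I would recall that for $p = \N(\ve{\mu}_n, \ma{\Sigma}_n)$ and $q = \N(\ve{\mu}_m, \ma{\Sigma}_m)$ the divergence decomposes into a log-determinant term, a dimension offset, a trace term, and a Mahalanobis term:
\[
D_{KL}(p\|q) = \tfrac{1}{2}\Big[\log\tfrac{\det\ma{\Sigma}_m}{\det\ma{\Sigma}_n} - L + \operatorname{tr}\big(\ma{\Sigma}_m^{-1}\ma{\Sigma}_n\big) + (\ve{\mu}_n - \ve{\mu}_m)^{\top}\ma{\Sigma}_m^{-1}(\ve{\mu}_n - \ve{\mu}_m)\Big].
\]
This identity (valid for any pair of nondegenerate Gaussians) is exactly the closed form referenced as \eqref{eqn:KL_div} in the text, so I would treat it as the starting point rather than re-derive it.

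The key step is then to substitute the hypothesis $\ma{\Sigma}_n = \ma{\Sigma}_m = \diag(\ve{\sigma})$ with every entry of $\ve{\sigma}$ equal to the same constant $\sigma$. I would argue term by term: since the two covariances coincide, the determinant ratio equals $1$ and its logarithm vanishes; the trace term becomes $\operatorname{tr}(\ma{\Sigma}_m^{-1}\ma{\Sigma}_n) = \operatorname{tr}(\mI) = L$, which cancels the $-L$ offset exactly. What survives is only the Mahalanobis term, and because $\ma{\Sigma}_m^{-1} = \sigma^{-1}\mI$, it collapses to $\sigma^{-1}\sum_{l=1}^L (\mu_{n,l} - \mu_{m,l})^2$. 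Hence $D_{KL}(p\|q) = \frac{1}{2\sigma}\sum_{l=1}^L (\mu_{n,l} - \mu_{m,l})^2$, establishing proportionality to the squared Euclidean distance between the means with proportionality constant $\frac{1}{2\sigma}$.

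Honestly, there is no deep obstacle here: the lemma is a direct specialization, and the only thing I would take care to emphasize is that the constant $\frac{1}{2\sigma}$ does not depend on the particular sample pair $(n,m)$. This is what makes the simplification legitimate in practice — every pairwise KL term in the distance measure \eqref{domain_kl} is scaled by the \emph{same} factor, so replacing each $D_{KL}$ by $\sum_{l}(\mu_{n,l}-\mu_{m,l})^2$ rescales $\mathfrak{D}_{i,k}$ uniformly and therefore preserves the ordering of source agents by similarity. I would also note in passing that the assumption $\sigma \ll 1$ is not needed for the proportionality itself (it holds for any common $\sigma > 0$); its role is merely to justify, empirically, that the observed covariances are effectively equal and small, so that the equal-covariance premise of the lemma is a faithful approximation of the actual \ac{VAE} posteriors.
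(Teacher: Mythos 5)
Your proof is correct and follows essentially the same route as the paper: write down the closed-form Gaussian KL divergence, observe that the log-determinant term vanishes and the trace term cancels the $-L$ because the covariances coincide, and read off the surviving Mahalanobis term. The only cosmetic difference is the proportionality constant --- you obtain $\tfrac{1}{2\sigma}$ by reading $\diag(\ve{\sigma})$ literally as the covariance, while the paper substitutes $\ma{\Sigma}_n=\ma{\Sigma}_m=\diag([\sigma^2,\ldots,\sigma^2])$ and gets $\tfrac{1}{2\sigma^2}$; either reading yields the claimed proportionality, and your added remark that the constant is independent of the pair $(n,m)$ (so the agent ordering in \eqref{domain_kl} is preserved) is a worthwhile point the paper leaves implicit.
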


\begin{proof}
It is easy to derive that
\begin{equation}
\begin{split}
D_{KL}(p\|q) = & \frac{1}{2}\Big[\log\frac{|\ma{\Sigma_n}|}{|\ma{\Sigma_m}|} - L +\\
& (\ve{\mu}_n-\ve{\mu}_m)^T\ma{\Sigma_m^{-1}}(\ve{\mu}_n-\ve{\mu}_m) + \\
&\operator{Tr}\left\{\ma{\Sigma_m^{-1}\ma{\Sigma}_n}\right\} \Big].
\end{split}
\label{eqn:KL_div}
\end{equation}
Because $\ma{\Sigma}_n=\ma{\Sigma}_m = \diag([\sigma^2, ...,\sigma^2])$, we have the first term in \eqref{eqn:KL_div} equals to $0$, and the last term equals to $L$. Thus, we obtain
\begin{equation}
\vspace{-2ex}
D_{KL}(p\|q) = \frac{1}{2\sigma^2}\sum_{l=1}^L(\mu_{n,l} -\mu_{m,l})^2.
\label{eqn:sim_DL}
\end{equation}
\vspace{-2ex}
\end{proof}

With Lemma \ref{lem:kl_mean_prop}, we can measure the distance between two agents more efficiently, based on the extracted $\ve{\mu}_n$ and $\ve{\mu}_m$ in the source and target latent spaces. Thus, to solve Problem \eqref{prob:TL}, we propose to choose the source agent:
\begin{equation}
i_k^\ast := \argmin_{i\in\overline{K}} \mathfrak{D}_{i,k},
    \label{eqn:agent_selection}
\end{equation}
where $\mathfrak{D}_{i,k}$ is computed based on \eqref{domain_kl} and \eqref{eqn:sim_DL}.

\begin{figure*}[!t] 
\captionsetup{justification=centering}
  \begin{minipage}[t]{0.33\textwidth}
    \centering
    \includegraphics[width=2.3in, height=1.4in]{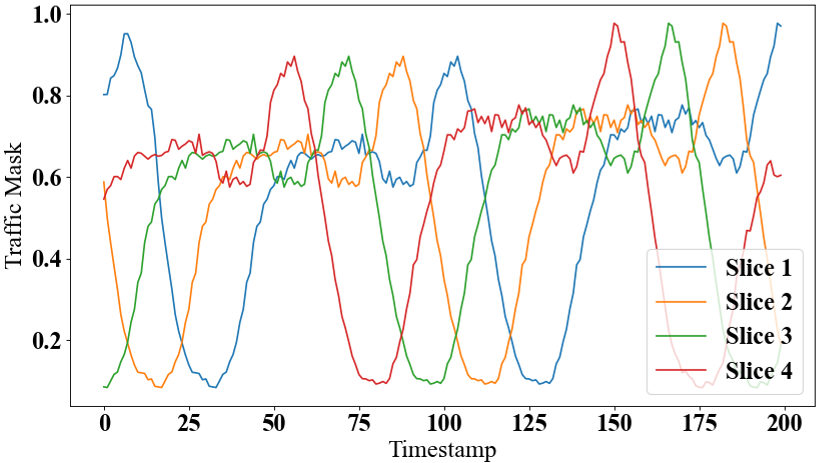}
    \captionof{figure}{\small Traffic mask to imitate the time varying network traffic}
    \label{fig:TrafMask}
  \end{minipage}
  \begin{minipage}[t]{0.33\textwidth}
    \centering
    \includegraphics[width=2.3in, height=1.4in]{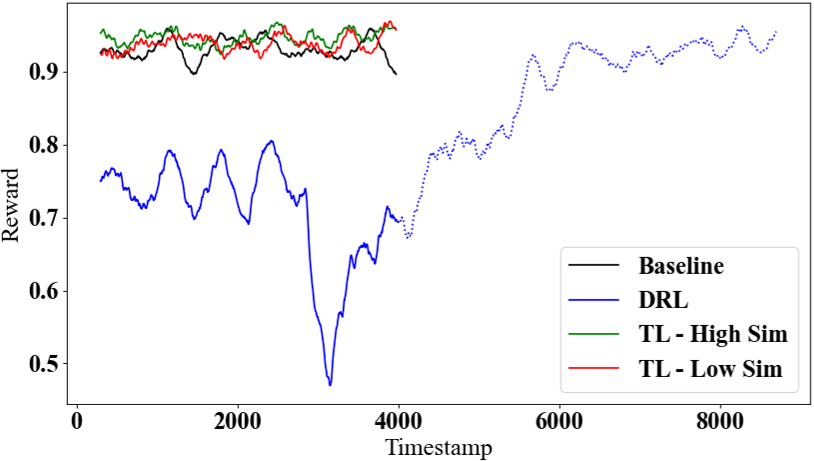}
    \caption{\small Comparing reward during the training process}
    \label{fig:DRL_TL}
  \end{minipage}
  \begin{minipage}[t]{0.33\textwidth}
    \centering
    \includegraphics[width=2.3in, height=1.4in]{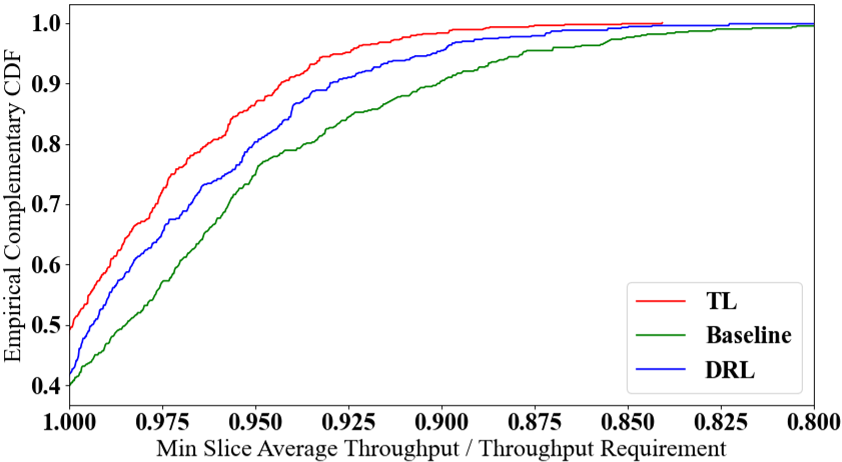}
    \captionof{figure}{\small Comparing CDF of minimum slice throughput satisfaction}
    \label{fig:CDF_Thr}
  \end{minipage}
\end{figure*}

\subsection{Integrated Transfer Learning Approach}\label{subsec:TL_Approach}

In general, the prelearned knowledge can be transferred from a source agent $i$ to the target agent $k$ with various policy transfer strategies $\Lambda(\cdot)$ and instance transfer strategy $\Gamma(\cdot)$:
\begin{itemize}
	\item {\bf Model transfer}: The policy transfer strategy $\Lambda(\cdot)$ simply initializes the target agent's policy  $\pi^{(0)}_k$ by loading the parameters (e.g., weights of the pretrained neural networks) of the pretrained policy $\pi^{(S)}_i$ from the source agent $i$. 
	\item {\bf Feature transfer}: The policy transfer strategy $\Lambda(\cdot)$ keeps partial information extracted from the source agent's pretrained policy $\pi^{(S)}_i$. In particular, the target agent loads partial of the layers (usually the lower layers) of the pretrained neural networks of $\pi^{(S)}_i$, while leaving the rest of them to be randomly initialized. Then, during training, the loaded layers are frozen and only the randomly initialized layers are fine-tuned with the instances newly collected by the target agent. 
 	\item {\bf Instance transfer}: The instance transfer strategy $\Gamma(\cdot)$ transfers the collected instances from the source agent $i$ to the target agent $k$ and saves them in the target agent's replay buffer. Then, the target agent trains a policy from scratch with randomly initialized parameters and mixed instances collected from both source and target agents.
\end{itemize}

The above-mentioned knowledge from the source domain and source task can be transferred separately or in a combined manner. In this paper, we propose the {\bf integrated transfer method} with both model and instance transfer. Specifically, the target agent $k$ initializes its local policy $\pi^{(0)}_k$ by loading the pretrained policy of the source agent $\pi^{(S)}_i$ and fine-tunes the policy by sampling from the replay buffer containing both types of instances: the instances transferred from the source agent and those locally experienced. Here, we skip the feature transfer because it practically performs well only when the similarity between the source domain/task and target domain/task is very high. Although this assumption may hold for some regression and classification tasks, we empirically find that it fails in this context of \ac{MADRL}.


\section{Performance Evaluation}
In this section, we evaluate the performance of the proposed solution within a system-level simulator \cite{SeasonII}. The simulator achieves a great accuracy in imitating the real network systems with configurable user mobility, network slicing traffic and topology. In addition, we introduce a traffic-aware baseline which allocates resource proportionally to the data traffic demand per slice. Note that the baseline assumes perfect information about per-cell per-slice traffic demands, which provides already very good results.

\subsubsection{Network settings}
We build a radio access network with $4$ three-sector sites (i.e., $K=12$ cells). All cells are deployed using LTE radio technology with $2.6$ GHz under a realistic radio propagation model Winner+ \cite{Winner}. Each cell has $N=4$ slices with diverse per-slice requirements in terms of average user throughput and delay. In the cells with label ${1, 2, 3, 7, 8, 9}$, we define per-slice average throughput requirements of $\phi^*_1=4$ MBit/s, $\phi^*_2=3$ MBit/s, $\phi^*_3=2$ MBit/s, and $\phi^*_4=1$ MBit/s respectively, and per-slice delay requirements of $d^*_1 = 3$ ms, $d^*_2 = 2$ ms, $d^*_3 = d^*_4 = 1$ ms. In the cells with label ${4, 5, 6, 10 ,11 ,12}$, we define per-slice throughput requirements as $\phi^*_1=2.5$ MBit/s, $\phi^*_2=2$ MBit/s, $\phi^*_3=1.5$ MBit/s, and $\phi^*_4=1$ MBit/s, and delay requirements of $d^*_n = 1$ ms, $\forall n\in\N$. All cells have the same radio bandwidth of $20$ MHz.

We define four groups of \ac{UE} associated to four slices in each cell respectively, each \ac{UE} group has the maximum size of $32$ and moves randomly among the defined network scenario. To mimic dynamic behavior of real user traffic, we apply a varying traffic mask $\tau_n(t)\in[0 ,1]$ to each slice to scale the total number of \ac{UE}s in each cell, Fig. \ref{fig:TrafMask} shows the traffic mask in first $200$ steps.

\subsubsection{\ac{DRL} training configuration}
For \ac{MADRL} training, we implemented \ac{TD3} algorithm at each local agent using \ac{MLP} architecture for actor-critic networks. In each \ac{TD3} model, both actor and critic neural works consist of two layers with the number of neurons as $(48, 24)$ and $(64, 24)$ respectively. The learning rates of actor and critic are $0.0005$ and $0.001$ accordingly with Adam optimizer and training batch size of $32$. We set the discount factor as $\gamma = 0.1$, since the current action has stronger impact on instant network performance than future observation. As for the training, for distributed \ac{DRL} agents we applied $3000$ steps for exploration, $5500$ steps for training, and final $250$ steps for evaluation. For \ac{TL} training process, we apply the same model setups as \ac{DRL} approaches, while only setting $4000$ steps for training and $250$ for evaluation since knowledge transfer save the time for exploration.

\subsubsection{Comparing \ac{DRL} to \ac{TL} aided approach}

\begin{figure*}[!t] 
\captionsetup{justification=centering}
  \begin{minipage}[t]{0.33\textwidth}
    \centering
    \includegraphics[width=2.3in, height=1.4in]{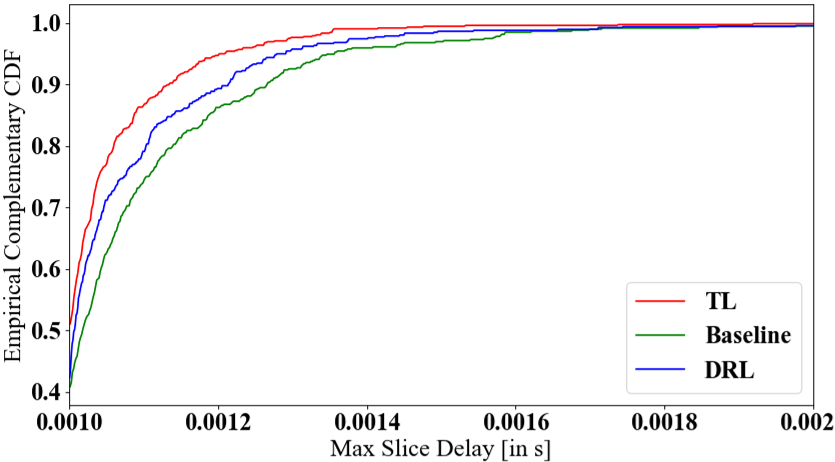}
    \captionof{figure}{\small Comparing CDF of maximum slice delay}
    \label{fig:CDF_Delay}
  \end{minipage}
  \begin{minipage}[t]{0.33\textwidth}
    \centering
    \includegraphics[width=2.3in, height=1.4in]{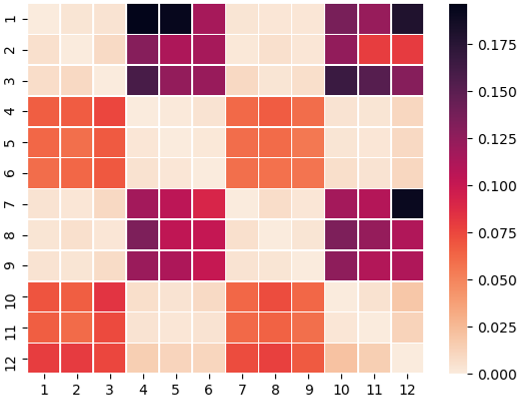}
    \caption{\small Inter-agent distance measure}
    \label{fig:Domain_Sim}
  \end{minipage}
  \begin{minipage}[t]{0.33\textwidth}
    \centering
    \includegraphics[width=2.3in, height=1.4in]{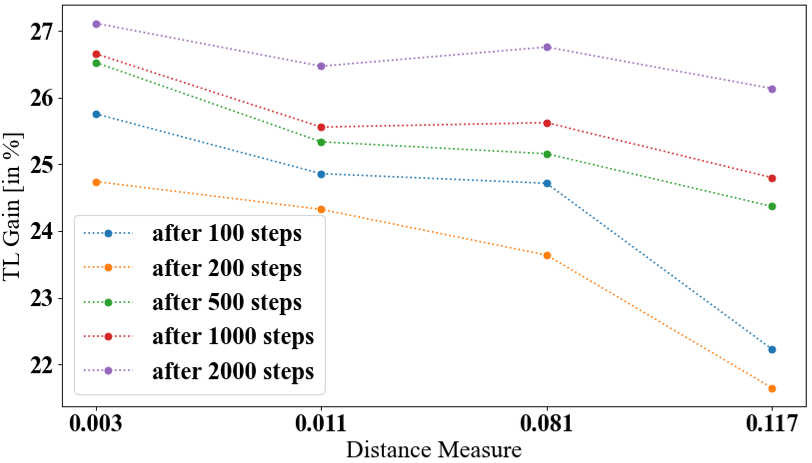}
    \captionof{figure}{\small TL performance gain depending on distance measure}
    \label{fig:TL_Compare}
  \end{minipage}
\end{figure*}

In Fig. \ref{fig:DRL_TL} we compare the evolution of reward during the training processes among the baseline, \ac{DRL} approach (proposed in Section \ref{subsec:MADRL}), and \ac{TL} approaches when transferred from source agent with low and high similarity (proposed in Section \ref{subsec:Domain_Sim} and \ref{subsec:TL_Approach}), respectively. For \ac{DRL}, we present the first $4000$ step, i.e., the same training time as \ac{TL} approaches with solid line and the rest training curve with dashed line.

As shown in Fig. \ref{fig:DRL_TL}, the distributed \ac{DRL} approach learns to achieve similar reward as baseline after a lengthy exploration phase, while both \ac{TL} approaches start with much higher start compared to \ac{DRL}. After a short fine-tuning period, the \ac{TL} approaches outperform the baseline with higher robustness, especially during the period with higher traffic demands and strong inter-cell interference where baseline has sharp performance degradation. Besides, in comparison between the \ac{TL} from agents with different similarity measure, we observe that with higher similarity, \ac{TL} provides higher start at the early stage of training, while both of them converge to similar performance after the training converges.

For performance evaluation, we compare the statistical results on minimum per slice throughput satisfaction level and maximum per slice delay, respectively, among all cells among the methods baseline, distributed \ac{DRL} and the proposed \ac{TL} approach after convergence. Fig. \ref{fig:CDF_Thr} illustrated the empirical complementary \ac{CDF} which equals $1-F_X(x)$ where $F_X(x)$ is the \ac{CDF} of minimum per slice throughput satisfaction level. We observe that the \ac{TL} approach provides the best performance comparing to others by achieving only about $12\%$ fail to satisfy $0.95$ of the requirement, while converged \ac{DRL} and baseline conclude $19\%$ and $25\%$ failure rate respectively. By average satisfaction level, the \ac{TL} approach conclude $0.92$ while \ac{DRL} and baseline only provide $0.90$ and $0.87$. Similar observation can be made from Fig. \ref{fig:CDF_Delay}, which illustrates the \ac{CDF} of maximum slice delay in ms. The \ac{TL} approach provides $1.5$ ms maximum average per-slice delay, while \ac{DRL} achieves $1.7$ ms and baseline achieves $1.8$ ms.

\subsubsection{Inter-agent similarity analysis}
We implemented the similarity analysis method introduced in Section \ref{subsec:Domain_Sim} with a \ac{VAE} model in \ac{MLP} architecture, both networks of encoder and decoder consist of 3 layers with number of neurons as $(64, 24, 4)$ and $(4, 24, 64)$ respectively. To achieve a good trade-off between low dimensional latency space and accurate reconstruction with \ac{VAE}, we map the original sample $x\in\R^{17}$ to the latent variable $\vz\in\R^{4}$.

Fig. \ref{fig:Domain_Sim} illustrates the results of inter-agent similarity analysis as a metric of distance measure proposed in \eqref{domain_kl}. It shows that our proposed method can distinguish cells with different per-slice service quality requirements and gather the cells with similar joint state-reward distribution.

\subsubsection{Dependence of \ac{TL} performance on distance measure}
In Fig. \ref{fig:TL_Compare} we compare the benefits of \ac{TL} in training process by transferring knowledge from source agents with different average inter-agent distance measures. The \ac{TL} gains are derived by comparing the reward to \ac{DRL} approach at the same training steps. The results show that before $200$ steps of \ac{TL} training, the \ac{TL} approaches with the lowest distance measure provides about $3\%$ higher gain than the one with the largest distance. As the training process continues, the gains in all \ac{TL} approaches increase with local fine-tuning and the difference between transferring from highly similar and less similar agents is getting smaller. However, \ac{TL} from the most similar agent proyvides higher gains for all training steps.

\subsubsection{Key Takeaways}: We summarized the takeaways from numerical results as follows:
\begin{itemize}
    \item All distributed \ac{DRL}-based approaches achieve better per-slice network service than the traffic-aware baseline after convergence. However, the \ac{TL} schemes outperform the conventional \ac{DRL} approach in terms of convergence rate, initial and converged performance. 
    \item Our propose \ac{VAE}-based similarity measure well quantifies the distance between agents and can be used to suggest a mapping from the defined distance measure to the transfer learning performance gain. 
    \item The difference between the gains achieved by \ac{TL} from the highly similar and the less similar agents is more significant when the number of training steps is low (i.e., with limited online training samples). Although the advantage of transferring from a highly similar agent over a less similar agent decreases when the number of online training steps increases, a slight performance gain is always achieved by transferring knowledge from the most similar source agent.
\end{itemize}

\section{Conclusion}

In this paper, we formulated the dynamic inter-slice resource partitioning problem to optimize the network requirement satisfaction level of all slices in each cell. To tackle the inter-cell interference, we proposed a coordinated \ac{MADRL} method with the coordination scheme of information sharing. We proposed a novel integrated \ac{TL} method to transfer the learned DRL policies among different local agents for accelerating the policy deployment. The method is accomplished by a new inter-agent similarity measurement approach and a new knowledge transfer approach. We evaluated the proposed solution with extensive simulations in a system-level simulator, where the results show our approach outperforms conventional \ac{DRL} solutions.

\section*{Acknowledgment}
This work was supported by the German Federal Ministry of Education and Research (BMBF) project KICK [16KIS1102K].

\bibliographystyle{IEEEtran}
\bibliography{myreferences}

\end{document}